\newcommand{\cal}[1]{\mathcal{#1}}
\begin{document}
\title{Schematic Polymorphism in the Abella Proof Assistant}

\author{Gopalan Nadathur}
\affiliation{%
  \institution{University of Minnesota}
}
\email{ngopalan@umn.edu}

\author{Yuting Wang}
\affiliation{%
  \institution{Yale University}
}
\email{yuting.wang@yale.edu}

\renewcommand{\shortauthors}{G. Nadathur and Y. Wang}

\begin{abstract}
The Abella interactive theorem prover has proven to be an effective
vehicle for reasoning about relational specifications. 
However, the system has a limitation that arises from the fact that it
is based on a simply typed logic: formalizations that are identical
except in the respect that they apply to different types have to be
repeated at each type. 
We develop an approach that overcomes this limitation while preserving
the logical underpinnings of the system.
In this approach object constructors, formulas and other relevant
logical notions are allowed to be parameterized by types, with the
interpretation that they stand for the (infinite) collection of
corresponding constructs that are obtained by instantiating the type
parameters.
The proof structures that we consider for formulas that are
schematized in this fashion are limited to ones whose type instances
are valid proofs in the simply typed logic. 
We develop schematic proof rules that ensure this property, a task
that is complicated by the fact that type information influences the
notion of unification that plays a key role in the logic. 
Our ideas, which have been implemented in an updated version of the
system, accommodate schematic polymorphism both in the core logic of
Abella and in the executable specification logic that it embeds. 
\end{abstract}

\keywords{}

\settopmatter{printfolios=true}
\maketitle

\section{Introduction}\label{sec:intro}

The \Abella proof assistant~\cite{baelde14jfr} is a
vehicle for formalizing object systems that are described in a
syntax-directed and rule-based fashion.
Its success in this domain can be attributed to three key
characteristics.
First, it provides a means for constructing specifications
using relations that are given via {\it fixed-point
definitions}~\cite{mcdowell00tcs}, an approach that works well even in
the presence of non-deterministic and non-terminating behavior. 
Moreover, the fixed-point definitions can be specialized to yield a 
treatment of induction and co-induction~\cite{tiu04phd}.
Second, \Abella facilitates a {\it higher-order abstract syntax} treatment
of objects whose structures encompass bound variables.
It realizes this capability by providing $\lambda$-terms as a means
for representing objects, by permitting such terms to be compared
modulo $\lambda$-conversion, and by incorporating a special quantifier
$\nabla$ (pronounced as \emph{nabla}) that allows recursive
descriptions to encompass binding
constructs~\cite{baelde14jfr,miller05tocl}. 
Finally, \Abella supports a \emph{two-level logic} approach to
reasoning~\cite{gacek12jar,mcdowell02tocl}.
In this approach, object systems can be described in an expressive
and executable specification logic called the logic of hereditary
Harrop formulas or \HH~\cite{miller12proghol} and can subsequently be
reasoned about through an embedding in the main logic underlying
\Abella.
Since \HH is implemented in the language
\LProlog~\cite{miller12proghol}, this feature of \Abella enables a
transparent process of reasoning about actual 
programs, a capability that has, for example, been used in formalizing
compiler correctness arguments~\cite{wang16esopshort}.

The underlying basis for \Abella is provided by a logic called
\Gee~\cite{gacek11ic}.
This logic is built over the expressions of the simply typed
$\lambda$-calculus. 
As a consequence, the descriptions of objects, the definitions and the
theorems that can be constructed in \Abella are all monomorphically typed. 
This can sometimes make a formalization task more tedious and less
modular than is desirable.
For example, in some approaches to compiling functional programs, it
is necessary to consider multiple intermediate languages, the
expressions of which must be distinguished by their types.
However, correctness arguments need common operations, such as
combining lists of bound variables, and similar proofs of properties 
concerning such operations at each type.
In the current system, this leads to a replication of definitions and
proofs that differ only in the type to which they apply~\cite{wang16esopshort}.
This problem carries over also to the specification logic in the
two-level logic approach: the need to embed that logic within a simply
typed one means that it too must be simply typed and specifications or
implementations of ``library'' operations, such as those over lists,
have to be repeated at different types.

In this paper, we develop an approach to overcoming this difficulty
while retaining the logical basis of \Abella.
The core of our idea is simple: we parameterize
specifications, definitions, theorems and proofs by types such that we
can obtain the simply typed versions we want essentially by
instantiating the types.  
However, the actual realization of this idea is more subtle than might
initially be apparent. 
One issue arises from the need to embed schematized specifications
developed in \HH in the main logic: to support this possibility, it
becomes necessary to permit the schematization of {\it parts} of a
fixed-point definition in addition to allowing for the
parameterization of the entire definition by types. 
Another issue concerns the structure of proofs.
A schematic ``theorem'' can be part of a library only if its validity is
independent of the particular type vocabulary in existence at the time
that it is proved.
One way to enforce this requirement is to limit attention to only
those proofs for the theorem that are independent of the way in which
the types that parameterize the theorem are instantiated.
However, ensuring that this property holds is tricky because of a basic
characteristic of \Gee: type information significantly influences
unification over simply typed $\lambda$-terms, an operation that is
fundamental to supporting case-analysis style reasoning over
fixed-point definitions.

The rest of this paper is devoted to highlighting the issues discussed
above and to presenting the technical machinery that we have developed
towards solving them.
In the next section we introduce \Abella and the logic underlying it. 
In this presentation, we simplify the logic in a way that allows us to 
focus on the key aspects of our work without a loss of generality.
Section~\ref{sec:schm-lang} then describes a parameterization of the
core language with types; of particular interest here is the modified
form for fixed-point definitions and the interpretation of
parameterized theorems.
Section~\ref{sec:schm-proofs} presents a lifting of the proof
rules for the core logic in a way that ensures that the schematic
proofs that result from using the lifted proof rules  have the
properties we desire of them.  
Section~\ref{sec:schm-abella} illustrates the developments in the
previous two sections by showing how they can be used to construct
schematic proofs for some sample schematic theorems.
Section~\ref{sec:fulllogic} makes the argument that the technical
developments in this paper apply to the full \Abella system even
though we have exposed them in a simplified setting.
It does this by describing the aspects of the underlying logic that
were left out earlier and by outlining how the previously presented
ideas can be extended to cover them.\footnote{This discussion also
  makes the case that the addition of schematic polymorphism in fact
  rationalizes an aspect of the \Abella system that previously had an
  ad hoc treatment.}
We conclude the paper in Section~\ref{sec:comparison} with a
discussion of related work and an assessment of the enhancements
to \Abella that are made possible by this work.
We have in fact implemented our ideas in an enhanced version of
\Abella that supports schematic polymorphism.
Information on how to download this system can be found at the website
at \href{http://sparrow.cs.umn.edu/schmpoly}{\url{http://sparrow.cs.umn.edu/schmpoly}}.
This website also contains some examples that illustrate the use of
the new features of the system.

\section{The Abella Proof System}\label{sec:abella}

For most of this paper, we will limit our view of the logic underlying
\Abella to one that does not include co-induction and the $\nabla$
quantifier.  
We will also restrict the embedded specification logic to
that of Horn clauses, a sublogic of \HH.
Our presentation of \Abella in this section implicitly builds in
these simplifications.
We have chosen to limit our discussion in this way because it eases
the presentation of the core technical ideas in our work without
obscuring the issues that are essential to realizing schematic
polymorphism in the full system.
To validate the latter observation, we include in
Section~\ref{sec:fulllogic} a discussion of how the developments
described in the earlier parts of the paper are impacted by the
additional features of \Abella.

\subsection{A Logic with Fixed-Point Definitions}

The logic \Gee is based on an intuitionistic and predicative subset of
Church's Simple Theory of Types \cite{church40}.
One component of the type
vocabulary is a collection of {\it sorts} that includes \prop, the
type for propositions, and at least one other member.
In addition, the vocabulary may include type constructors, i.e. it may
contain symbols 
such as $c$ with a designated arity $n$ that can be used with
types $\alpha_1, \ldots, \alpha_n$ to yield the type $c\tyapp \alpha_1 
\tyapp \cdots \tyapp \alpha_n$.
The sorts and the types that are constructed in this way constitute
the \emph{atomic types}.
The \emph{function types} are constructed using the infix and right
associative operator $\rightarrow$.
Each well-formed type may obviously be written in the form $\alpha_1
\rightarrow \cdots \rightarrow \alpha_n \rightarrow \beta$ where
$\beta$ is an atomic type.
When a type is rendered in this form, we refer to
$\alpha_1,\ldots,\alpha_n$ as its argument types and 
to $\beta$ as its target type.
Note that the sequence of argument
types may sometimes be empty, i.e. the type may simply be of the form
$\beta$.
A predicate type is a type with at least one argument type
and a target type that is \prop.

Terms are constructed from collections of typed constants and variables
using abstraction and application, the latter being a left associative
operator.
The formation rules for terms are constrained by types and they also
associate types with well-formed expressions in the usual way. 
Two terms are considered equal and hence interchangeable if one can be
$\lambda$-converted to the other.
In light of this fact, and also the fact that such a form
exists for every term, we will assume that any term
that we examine is in $\lambda$-normal form.
The constants are sub-divided into non-logical and logical 
ones.
Non-logical constants are restricted in that \prop must not
appear in their argument types.
Such a constant is called a predicate if it has a predicate type.
The logical constants comprise $\top$ and $\bot$ of type \prop,
$\land$, $\lor$, and $\supset$ of type $\prop \ra \prop \ra \prop$,
and, for each type $\alpha$ that does not contain \prop,
$\forall_\alpha$ and $\exists_\alpha$ of type $(\alpha \ra \prop) \ra
\prop$.
Well-formed terms of the type \prop are also called
formulas.
Formulas (in normal form) have the structure
$(p\app t_1\app \cdots \app t_n)$ where $p$ is a constant. If $p$ is a
non-logical constant, the formula is said to be \emph{atomic} and it
has $p$ as its \emph{head}. 
The constants $\land$, $\lor$ and $\supset$, which correspond to the
familiar propositional connectives, are written in infix form 
and obey the usual precedence and associativity conventions.
The constants $\forall_\alpha$ and $\exists_\alpha$ constitute generalized
quantifiers: the formulas $\forall_\alpha \app (\lambdax{x}{B})$ and
$\exists_\alpha\app (\lambdax{x}{B})$ are also written as 
$\typedallx{\alpha}{x}{B}$ and $\typedsomex{\alpha}{x}{B}$ that
correspond to the more common form for quantified expressions.
The formula ${\cal Q}x_1:\alpha_1. \ldots {\cal
  Q}x_n:\alpha_n.F$ in which $\cal Q$ is $\forall$ or $\exists$ will
be abbreviated by ${\cal Q} {x_1:\alpha_1,\ldots,x_n:\alpha_n}.F$.
Another widely applied convention is to write a sequence of the form 
$x_1:\alpha_1,\ldots,x_n:\alpha_n$ in which $\alpha_1,\ldots,\alpha_n$
are identical as $x_1,\ldots,x_n:\alpha_1$. 

Derivability in \Gee is elaborated by means of a sequent calculus. The
actual formulation makes explicit the eigenvariables that are used in
the treatment of universal quantification.
Specifically, sequents take the form $\sequent{\Sigma}{\Gamma}{F}$
where $\Gamma$ is a multiset of formulas, $F$ is a formula and
$\Sigma$ is a collection of (typed) eigenvariables that might appear
in addition to the constants in the terms in $\Gamma$ and $F$.
Further, the quantifier rules are the following:

\smallskip

\begin{center}
\begin{tabular}{c}
\infer[\allL]
      {\sequent{\Sigma}{\Gamma,\typedallx{\alpha}{x}{B}}{F}}
      {\sequent{\Sigma}{\Gamma,\subst{t}{x}{B}}{F}}
\quad
\infer[\allR\; (x \notin \Sigma)]
      {\sequent{\Sigma}{\Gamma}{\typedallx{\alpha}{x}{B}}}
      {\sequent{\Sigma, x:\alpha}{\Gamma}{B}}

      \\[5pt]
\infer[\someL\; (x \notin \Sigma)]
      {\sequent{\Sigma}{\Gamma,\typedsomex{\alpha}{x}{B}}{F}}
      {\sequent{\Sigma, x:\alpha}{\Gamma,B}{F}}
\quad
\infer[\someR]
      {\sequent{\Sigma}{\Gamma}{\typedsomex{\alpha}{x}{B}}}
      {\sequent{\Sigma}{\Gamma}{\subst{t}{x}{B}}}
\end{tabular}
\end{center}

\smallskip

\noindent Note that the proviso on $x$ in the $\allR$ and 
$\someL$ rules can always be met via $\alpha$-conversion.
In the $\allL$ and $\someR$ rules, $t$ must be a term of type $\alpha$
that may use the symbols in $\Sigma$ in addition to the a priori
available constants and $B[t/x]$
represents the substitution of $t$ for $x$ in $B$; we assume here and
below that all substitutions are performed in a capture avoiding
way. 
The calculus also includes rules for the other (propositional) logical
constants and the usual structural rules for intuitionistic logic that
we assume the reader to be familiar with.

The logic \Gee actually represents a \emph{family} of logics in the sense
that it is parameterized by \emph{definitions}.
A definition is a  
collection of \emph{clauses} that each have the form
$\clause{x_1:\alpha_1\ldots x_n:\alpha_n}{A}{B}$, where $A$ is an
atomic formula all of whose free variables appear in $x_1,\ldots,x_n$
and have the respective types $\alpha_1,\ldots,\alpha_n$, and $B$ is a
formula all of whose free 
variables appear in $A$.
Such a clause is \emph{for} the predicate that is the head of $A$,
$x_1:\alpha_1,\ldots,x_n:\alpha_n$ is called its \emph{binder} and $A$
and $B$ respectively constitute its head and body.
Definitions must be constructed in \emph{blocks}, with all the clauses for a
particular predicate confined to one block.
Moreover, there is a \emph{stratification restriction} on definitions:
the body of a clause in a block may contain a predicate constant only
if it is defined in a previous block or in the current block
and, further, a predicate that is defined in the current block may not
appear in the antecedent of an implication in the body.
Variables in the binder of a clause can be renamed in the usual way. A
definition $\cal D'$ is a variant of another definition $\cal D$ if
the clauses in $\cal D'$ and $\cal D$ are identical but for such a
renaming. 
A definition is \emph{named away from} a set of (eigen)variables
$\Sigma$ if the variables in the binders of  its clauses are
distinct from those in $\Sigma$.

Let us assume that the definition parameterizing \Gee has been fixed
to be $\cal D$. In this context, an atomic formula is considered to
hold exactly when it is the head of an instance of a clause in $\cal
D$ whose body also holds.
This interpretation is realized through rules for introducing
atomic formulas on the left and right sides of a sequent.
In presenting these rules, we make use of the common view 
of a substitution as a mapping on variables that is then extended to
a mapping on terms and we write $\appsubst{\theta}{t}$ to denote
the application of a substitution $\theta$ to a term $t$. The rule for
introducing an atomic formula on the right, ``conclusion'' side of a
sequent is then the following:

\smallskip

\begin{center}
\begin{tabular}{c}
\infer[\defR]
      {\sequent{\Sigma}{\Gamma}{A}}
      {\sequent{\Sigma}{\Gamma}{\appsubst{\theta}{B}}}
\\[1pt]
$\clause{x_1 : \alpha_1 \ldots x_n : \alpha_n}{A'}{B} \in {\cal
  D}$ and $\appsubst{\theta}{A'} = A$
\end{tabular}
\end{center}

\smallskip

\noindent As should be readily apparent, this rule encodes the idea of
backchaining on a clause to complete a derivation.

The rule for introducing an atomic formula on the left side of a
sequent codifies a case analysis style of reasoning: to derive a
sequent that has an atomic formula as an assumption, we consider the
different ways in which the definition might cause an instance of the
formula to hold and show that the corresponding refinements of the
sequent are derivable. 
The following definition makes precise the cases that must be
considered towards this end.
Given two substitutions $\theta_1$ and $\theta_2$, we use the notation
$\substcomp{\theta_1}{\theta_2}$ here to denote the substitution that
produces the term $\appsubst{\theta_1}{(\appsubst{\theta_2}{t})}$ when
applied to the term $t$.
Further, we write $\appsubst{\theta}{\Gamma}$ to represent
the result of applying the substitution $\theta$ to each element of
sequence of formulas $\Gamma$.
Finally, if $\Sigma$ is a collection of variables, we write $\Sigma
\theta$ to denote the result of removing from $\Sigma$ the variables
in the domain of $\theta$ and adding the  
variables that appear in the range of $\theta$.

\begin{definition}\label{def:csu}
A \emph{complete set of unifiers} for two terms $t_1$ and $t_2$ of identical type
is a collection $\Theta$ of substitutions such that (a) for each
$\theta \in \Theta$ it is the case that
$\appsubst{\theta}{t_1}=\appsubst{\theta}{t_2}$, and (b) for any
substitution $\sigma$ such that $\appsubst{\sigma}{t_1} =
\appsubst{\sigma}{t_2}$, there is a $\theta \in \Theta$ and a
substitution $\rho$ such that $\sigma = \substcomp{\rho}{\theta}$.
We write \csu{t_1}{t_2} to (ambiguously) denote such a collection
$\Theta$. 
Then, given a sequent $\cal S$ of the form 
$\sequent{\Sigma}{\Gamma,A}{F}$ where $A$ is an atomic formula, and a
definition $\cal D$, $\defcases{\cal
  S}{\cal D}$ (ambiguously) denotes the following set of sequents: 
\begin{tabbing}
\quad\=\quad\=\kill
\>$\{ \sequent{\Sigma \theta}{\Gamma\theta, B\theta}{F\theta}\ \sep$ \\
\>\> $\clause{x_1 : \alpha_1\ldots x_n : \alpha_n}{A'}{B} \in \cal D\ \mbox{\it and}\ \theta \in \csu{A}{A'} \}$
\end{tabbing}
\end{definition}
\noindent The desired inference rule can be formalized using the above
definition as follows:

\smallskip

\begin{center}
\begin{tabular}{c}
\infer[\defL]
      {\sequent{\Sigma}{\Gamma, A}{F}}
      {\defcases{\sequent{\Sigma}{\Gamma, A}{F}}{\cal D'}}
\\[1pt]
${\cal D'}$ is a variant of ${\cal D}$ named away from $\Sigma$
\end{tabular}
\end{center}

\smallskip

\noindent Observe that this rule may lead to a multiplicity of cases being
considered for two different reasons: the head of a clause in $\cal D$
may unify with the atomic formula in the conclusion of the rule in
more than one way and more than one clause in the definition may unify
with this atom.  

When the clauses for a particular predicate in definition $\cal D$ are
finite in number, then this ensemble may also be designated as
\emph{inductive}.
The following  \emph{induction} rule may be used in
derivations if the clauses for $p$ are inductive:

\smallskip

\begin{center}
\begin{tabular}{c}
  \infer[\indL]
        {\sequent{\Sigma}{\Gamma, p\app \seq{t}}{F}}
        { \begin{array}{c}
            \{\sequent{\seq{x :\alpha}}{B[S/p]}{S\app \seq{t'}}\ \sep
            \clause{\seq{x : \alpha}}{p\app \seq{t'}}{B} \in {\cal D}\}
            \\
          \sequent{\Sigma}{\Gamma, S \app \seq{t}}{F}
          \end{array}
        }
\end{tabular}
\end{center}

\smallskip

\noindent The notation $\seq{\cdot}$ used here represents
sequences (or sets shown as sequences) of expressions whose
shape is indicated by $\cdot$.
The term that instantiates $S$ in a use of this rule is referred to as
the inductive invariant.
This term must contain no free variables and its type must be the same
as that of $p$.
The rule formalizes the intuition that if $p$
is defined inductively and $S$ satisfies all the clauses for $p$ then
$S\app \seq{t}$ must hold whenever $p\app \seq{t}$ does; hence 
$F$ must follow from $\Gamma$ and $p\app \seq{t}$ if 
it follows from $\Gamma$ and $S\app \seq{t}$. 

\subsection{Constructing Proofs in Abella}
\label{subsec:proofs-in-abella}

\Abella is a vehicle for interactively identifying
types, constants and definitions in \Gee and then trying to prove
relevant assertions.
Showing that a formula $F$ holds amounts to constructing a proof for
the sequent $\sequent{\cdot}{\cdot}{F}$.
\Abella supports a tactics-style approach based on the inference rules
of \Gee towards this end.
We illustrate the structure of this system through a few simple
reasoning examples.
In this discussion we focus mainly on the treatment of the definition
rules, assuming that the reader would be familiar with the treatment
of the remaining rules from other contexts. 

The examples that we consider concern representing and reasoning about
relations on lists.
We assume that the type vocabulary includes the two sorts $\iota$ and
\klist in addition to $\prop$.
We further assume that the availability of the nonlogical constants
\nil and $\scons$ of types \klist and $\iota \ra \klist \ra \klist$
respectively that provide us a means for constructing the commonly
used representations of lists in \Gee. 
To simplify notation, we will write $\scons$ as an infix and right
associative operator.
In this context let us consider a definition with the
following inductive clauses
for the predicate \gappend of type $\klist \ra \klist \ra \klist \ra
\prop$: 
\begin{tabbing}
\quad\=\qquad\qquad\=\kill
\>$\clause{l:\klist}{\gappend\app \nil\app l\app l}{\top}$\\
\>$\clauseheader{x:\iota,l_1:\klist,l_2:\klist,l_3:\klist}$\\
\>\>$\clausesans{\gappend\app (x\scons l_1)\app l_2\app (x\scons l_3)}
    {\gappend\app l_1\app l_2\app l_3}$
\end{tabbing}
As is perhaps evident, these clauses identify \gappend to be the
append relation on lists.
Once we have them, we can pose the usual logic programming style
queries about \gappend.
An example of such a query would be the question of whether the
formula $\rexistsx{l:\klist}{\gappend\app \nil\app  l\app l}$ holds.
A proof of the corresponding sequent can be constructed by using the
first clause for \gappend in a backchaining mode via the \defR rule to
derive the sequent obtained by instantiating the quantifier in the
formula with \nil. 

The fact that definitions have a fixed-point nature that goes beyond
the logic programming interpretation of clauses becomes clear when we
consider formulas in which atomic \gappend formulas appear negatively.
For example, consider proving
\begin{tabbing}
  \quad\=\kill
\>$\rforallx{l_1:\klist, l_2:\klist}{\gappend\app \nil\app l_1\app l_2
  \rimp l_1 = l_2}$;
\end{tabbing}
the equality relation used here represents $\lambda$-convertibility.
We can reduce this task through a few obvious steps to deriving 
the sequent
\begin{tabbing}
  \quad\=\kill
  \>$\sequent{(l_1:\klist,l_2:\klist)}{\gappend\app \nil\app l_1\app
    l_2}{l_1=l_2}$.
\end{tabbing}
Using the \defL rule will lead us to
consider the values for $l_1$ and $l_2$ under the different ways in
which $\gappend\app \nil\app l_1\app l_2$ can hold.
The head for only the first clause for \gappend unifies with this
formula and the proof can be concluded by noting that in this case
$l_1$ and $l_2$ must be identical. 

Many interesting properties need inductive arguments.
An example of such a property is the assertion that \gappend is
functional in its first two arguments:
\begin{tabbing}
\quad\=\qquad\qquad\=\kill
\>$\rforallxheader{l_1:\klist,l_2:\klist,l_3:\klist,l_4:\klist}$\\
\>\>${\gappend\app l_1\app l_2\app l_3 \supset \gappend\app l_1\app l_2\app l_4 \supset
  l_3 = l_4}.$
\end{tabbing}
While the \indL rule is the basis for inductive arguments,
using it directly usually leads to a complex construction.
\Abella provides a mechanism that is grounded in the \indL rule
but that yields more intuitive proofs.
This mechanism can be used to establish a formula of the form
$\rforallx{\overline{x:\alpha}}{F_1 \rimp \cdots \rimp F_n \supset
  C}$.
Specifically, it can be 
invoked with respect to some $F_i$ in this formula if $F_i$ is 
atomic and it has as its head a predicate whose clauses are marked as
inductive.
The result of doing so will be to add the formula to be proved as an
\emph{induction hypothesis} to the assumption set with the following
proviso: it can be used only by 
matching $F_i$ with an assumption formula that is obtained by
``unfolding'' the version of 
$F_i$ in the formula to be proved using a definition clause.
The proviso is realized by marking the predicate head of
$F_i$ in the unfolded form with $^*$ and in the original version with
$@$.
To illustrate the process, let us assume that $F$ represents the formula
\begin{tabbing}
\quad\=\qquad\qquad\=\kill
\>$\rforallxheader{l_1:\klist,l_2:\klist,l_3:\klist,l_4:\klist}$\\
\>\>${\gappend^*\app l_1\app l_2\app l_3 \supset \gappend\app l_1\app l_2\app l_4 \supset l_3 = l_4}$
\end{tabbing}
Then, using the \emph{induction tactic} and the rules for the logical
symbols appearing in the formula, we can reduce the task of showing
the functional property of \gappend to proving the following sequent:
\begin{tabbing}
\quad\=\qquad\qquad\=\kill
\>$\sequentheader{(l_1:\klist,l_2:\klist,l_3:\klist,l_4:\klist)}$\\
\>\>$\sequentsans{F, \gappend^@\app l_1\app l_2\app l_3, \gappend\app l_1\app l_2\app l_4}{l_3 = l_4}.$
\end{tabbing}
At this stage, we can use the \defL rule with respect to the second
assumption formula in the sequent.
The case when this formula is unfolded using the
first clause for \gappend has an easy proof.
Unfolding it using the second clause yields the sequent
\begin{tabbing}
\quad\=\qquad\qquad\=\kill
\>$\sequentheader{(x: \iota, l'_1,l_2,l'_3,l_4:\klist)}$\\
\>\>$\sequentsans{F, \gappend^*\app l'_1\app l_2\app l'_3, \gappend\app
            (x\scons l'_1)\app l_2\app l_4}{x \scons l'_3 = l_4};$
\end{tabbing}
note the changed annotation on \gappend that records the effect of
unfolding.
This sequent can be proved by using the \defL rule on the third
assumption formula and then ``applying'' $F$ to the remaining
assumption formulas. 

\subsection{Embedding a Separate Specification Logic}
\label{subsec:encode-spec-logic}

Relational specifications can be encoded and reasoned about directly
using definitions in \Gee. However, \Abella also allows specifications
to be written in an independent, executable specification logic and
then to be reasoned about through an encoding of that logic in
\Gee. We sketch the way in which this approach is realized below.

The specification logic is based, once again, on the simply typed
$\lambda$-calculus.
The expressions in this logic are required to be simply typed because
we desire eventually to embed them within \Gee.
The use of the (simply typed) $\lambda$-calculus, on the other hand, is
motivated by the benefits this choice provides in encoding formal
systems~\cite{miller12proghol}.
Types and terms in the logic are constructed as in \Gee with the
difference that $\omic$ is used for the type of propositions and the
logical constants are limited to $\strue$ of type $\omic$, $\sconj$ and
$\simply$ of type $\omic \ra \omic \ra \omic$ and, for each
(specification logic) type $\alpha$ not containing $\omic$,
$\sfall_\alpha$ of type $(\alpha \ra \omic) \ra \omic$.
The constants $\sconj$ and $\simply$ are the specification logic versions of
conjunction and implication and are written in infix form with the
usual precedence and associativity conventions.
The family of constants $\sfall_\alpha$ represent universal
quantification.
We will write $\sfall_\alpha\app (\lambdax{x}{F})$ as
$\typedforallx{x:\alpha}{F}$ and will abbreviate
$\typedforallx{x_1:\alpha_1}{\ldots\typedforallx{x_n}{\alpha_n}{F}}$
as $\typedforallx{x_1:\alpha_1,\ldots,x_n:\alpha_n}{F}$.
A term of type $\omic$ is called a 
(specification logic) formula.
A \emph{goal formula} is a formula that is atomic, $\strue$ or a
conjunction of goal formulas.
Relations are specified through a collection of \emph{definite
  clauses} that are closed formulas of the form
$\typedforallx{\seq{x:\alpha}}{G \simply A}$ where $G$ is a goal
formula and $A$ is an atomic formula.
The specification logic is oriented towards constructing derivations for a
closed goal formula $G$ from a specification $\Gamma$ which is a collection of
definite clauses, an objective
that is represented by the (specification logic) sequent
$\ssequent{\Gamma}{G}$.
The following inference rules may be used in realizing such an
objective: 
\begin{center}
\begin{tabular}{cc}

\qquad\quad \infer[\trueR]
      {\ssequent{\Gamma}{\strue}}
      {}

&

\infer[\andR]
      {\ssequent{\Gamma}{G_1 \sconj G_2}}
      {\ssequent{\Gamma}{G_1} \quad \ssequent{\Gamma}{G_2}}

\\[8pt]
\multicolumn{2}{c}{
\infer[\backchain \quad A\ \mbox{\rm is an atomic formula}]
      {\ssequent{\Gamma}{A}}
      {\ssequent{\Gamma}{\appsubst{\theta}{G}}}
}

\\[1pt]
\multicolumn{2}{c}{$\forallx{\seq{x:\alpha}}{G \simply A'} \in \Gamma,
  \theta\ \mbox{\rm is a closed substitution for}\ \seq{x}\mbox{\rm ,
  and}\ \appsubst{\theta}{A'} = A$}
\end{tabular}
\end{center}

\medskip
\noindent These rules can be attributed an execution semantics and, as
such, they are part of the mechanism for interpreting specifications as 
programs in \LProlog~\cite{miller12proghol}.

To encode the specification logic in \Gee, we need first to represent
its expressions.
This is accomplished by lifting its type and term vocabulary into
\Gee.
The specific collection of sorts, type constructors and non-logical
constants depends, of course, on the object system that we are
interested in formalizing.
The example that we consider below will again concern relations
on lists but will represent them this time in the specification
logic.
In this context, it will assume the sorts $\iota$ and \klist. 
We represent atomic formulas in a way that makes them syntactically
distinguishable: an atomic formula $A$ is represented by
$\atmconst\app A$ where $\atmconst$ is a constant of type $\omic \ra
\omic$. 

The derivation rules of the specification logic are encoded in \Gee as 
clauses that define a derivability predicate.
To support induction on the heights 
of derivations, this predicate is indexed by natural numbers, 
which are represented by terms of type \typenat constructed using the
constants \zero and \s of type \typenat and $\typenat \ra \typenat$
respectively.
In the context of \Gee, the basis for induction is 
provided by the following inductive clauses for the predicate \termnat
that has the type $\typenat \ra \prop$:
\begin{center}
  \begin{tabular}{cc}
  $\clausesans{\termnat\app \zero}{\top}$ \qquad & \qquad
$\clause{n:\typenat}{\termnat \app (\s\app n)}{\termnat\app n}$ 
\end{tabular}
\end{center}
The derivability predicate \prove has the type $\typenat \ra
\omic \ra \prop$ and is defined by the following clauses:
\begin{tabbing}
\quad\=\quad\=\kill
\>$\clause{n:\typenat}{\prove\app n\app \strue}{\top}$\\
\>$\clauseheader{n : \typenat, g_1 : \omic, g_2 : \omic}$\\
\>\>$\clausesans{\prove\app (\s\app n)\app (g_1 \sconj g_2)}
    {(\prove\app n\app g_1) \rand (\prove\app n \app g_2)}$\\
\>$\clauseheader{n:\typenat,a:\atm}$\\
\>\>$\clausesans{\prove\app (s\app n)\app (\atmconst\app a)}
    {\typedsomex{\omic}{g}{(\prog\app a \app g) \rand (\prove \app n\app g)}}$
\end{tabbing}
Here, the predicate \prog, which has type $\omic \ra \omic \ra \prop$,
serves as the interface for encoding particular specifications that
might be provided in the specification logic.
For example, suppose that we have identified $\append$ of type $\klist
\ra\klist\ra \klist\ra \omic$ to be the specification logic encoding
of the append relation on lists through the following clauses:
\begin{tabbing}
\quad\=\quad\=\kill
\>$\forallx{l:\klist}{\strue \simply (\append\app \nil\app l\app l)}$\\
\>$\forallxheader{x:\iota}\forallxheader{l_1,l_2,l_3:\klist}$\\
\>\>${\append\app l_1\app l_2\app l_3
      \simply \append\app (x\scons l_1)\app l_2\app (x \scons l_3)}$;
\end{tabbing}
\nil and $\scons$ are used here as the specification logic versions
of the list constructors that are then lifted into \Gee.
This definition translates into the following clauses for \prog in \Gee:
\begin{tabbing}
\;\;\=\;\;\=\kill
\>$\clause{l:\klist}{\prog\app (\append\app \nil\app l\app l)\app \strue}{\top}$\\
\>$\clauseheader{x:\iota,l_1:\klist, l_2:\klist, l_3:\klist}$\\
\>\>$\clausesans{\prog\app (\append\app (x\scons  l_1)\app l_2\app (x \scons l_3)) \app
   (\atmconst\app (\append\app l_1\app l_2\app l_3))}{\top}$
\end{tabbing}

Specifications can be constructed in a standalone mode in the
specification logic and can be used as programs to realize
computations.
When we want to reason about such a program, we begin the process by
\emph{loading} the corresponding specification into \Abella.
Doing so lifts the vocabulary the specification introduces into \Gee
and adds a \prog clause for each of the definite clauses it contains.
At this stage, we can prove properties of the program by using the
\prove predicate to represent what is derivable from the program in
the specification logic.
A special notation is provided to make the encoding transparent: the
expression $\sprove F$ represents the \Gee formula
$\rexistsx{n:\typenat}{nat\app n \rand \prove\app n\app F}$.
Thus, the functional nature of the rendition of \append in the
specification logic can be expressed via the \Gee formula
\begin{tabbing}
  \quad\=\quad\=\kill
  \>$\rforallxheader{l_1,l_2,l_3,l_4:\klist}$\\
  \>\>${{\sprove{\append\app l_1\app l_2\app l_3} \supset
    \sprove{\append\app l_1\app l_2\app l_4} \supset l_3 = l_4}}.$
\end{tabbing}
Applying the \defL rule to the $\sprove{\cdot}$ predicate translates
transparently into case analysis over the specification logic
clauses.
For example, using it on the formula $\sprove{\append\app l_1\app
  l_2\app l_3}$ will yield the two cases corresponding to
$\sprove{\append\app nil\app l\app l}$ (setting $l_2$ and $l_3$ 
to $l$) and $\sprove{\append\app (x \scons l'_1)\app l_2\app (x\scons
  l'_3)}$ (setting $l_1$ to $x \scons l'_1$ and $l_3$ to $x \scons
l'_3$).
The induction tactic can also be invoked with respect to the
$\sprove{\cdot}$ predicate.
Such an application amounts eventually to an induction on
the height of a derivation measured by the first argument to $\prove$.
It is an easy exercise to construct a proof of the functional property
of \append using these observations. 

\section{A Parameterization by Types}
\label{sec:schm-lang}

In the examples pertaining to lists that we considered in the previous
section, we fixed the type of list elements to be $\iota$.
Having to fix the type in this fashion is a consequence of using a
simply typed language.
In an actual application, we may need to consider relations both in
the specification logic and in \Gee over lists with different types of
elements.
The formalization of these relations would have a similar structure as
also would the proofs of properties pertaining to them.
However, the \Abella system in its current form requires us to repeat
the formalization and the proofs for each type at which they are needed.
Our goal in this paper is to provide a means for avoiding this kind of
redundancy.
We aim to do this by enabling the process to be parameterized by
types in a way that ensures that instantiating the parameters with actual
types will yield constructions that are legitimate in the underlying
(simply typed) logic. 
We begin the task of realizing our goal by describing in this section
a modification to the language and the structure of definitions that
provide the basis for the desired parameterization.

The first step in this direction is to allow variables to appear as
atomic types in type expressions.
We call type expressions not containing type variables \emph{concrete
  types} or \emph{ground types}.
%
A \emph{type schema} is an expression of the form
$(\tyschm{A_1,\ldots,A_n}{\tau})$ where $A_1,\ldots,A_n$ is a sequence
of distinct type variables and $\tau$ is a type expression all of
whose type variables appears in $A_1,\ldots,A_n$.
A term constant now has a type schema associated with it.
We indicate such an association by writing
$(c:\tyschm{A_1,\ldots,A_n}{\tau})$. 
For example, a parameterized development pertaining to lists would use
the constants $(\nil:\tyschm{A}{\klist\ A})$ and $(\scons:\tyschm{A}{A \ra
  \klist\tyapp A \ra \klist\tyapp A})$, where $\klist$ is now a unary
type constructor.\footnote{This example also shows the usefulness of
  type constructors in the parameterization.}
An \emph{instance} of a constant of this kind is obtained by substituting
the types $\tau_1,\ldots,\tau_n$ for $A_1,\ldots,A_n$ and its type is
the corresponding instance of $\tau$.
We denote such an instance by $\cinst{c}{\tau_1,\ldots,\tau_n}$.
Thus, $\cinst{\nil}{\iota}$ represents an instance of $\nil$ that has
the type $(\klist\tyapp \iota)$.
In what follows, we may drop the subscript in depicting the instance
of a constant if the information it provides is not important
to the discussion.  
%
%

Terms are formed in the same way as in \Gee, except that we now use
instances of constants and type variables may appear in the types
of term variables.
A noteworthy aspect in this context is that the schematization of
types allows us to view the quantifier symbols differently from
before.
Instead of their being distinct constants at each relevant type, we
view them as the ``schematic'' constants $(\forall : \tyschm{A}{(A \ra
  prop) \ra \prop})$ and $(\exists : \tyschm{A}{(A \ra prop) \ra
  \prop})$  from which desired instances can be generated.
Observe that the types appearing in a term may now contain type variables.
A well-formed type- or term-level expression is said to be well-formed
  \emph{relative to a set of type variables $\Psi$} if the type
  variables appearing in the expression are contained in $\Psi$.
We extend the notion of substitutions to also include mappings on types.
We write $\apptysubst{\phi}{e}$ to denote the application of a
type-level substitution $\phi$ to a type-level or term-level
expression $e$ and $\tysubstcomp{\phi_1}{\phi_2}$ to denote the
substitution that produces
$\apptysubst{\phi_1}{\apptysubst{\phi_2}{e}}$ when applied to any term
or type $e$. 
A term in the extended language represents the
collection of terms in \Gee that are obtained by substituting ground
types for its type variables. 

We now consider parameterizing definitions by types.
Such a parameterization must occur at least at the \emph{level of
  definition blocks}. 
For example, consider the definition of \gappend from the previous
section.
In the schematic version, this definition should be for an \gappend
with the type $(\klist\tyapp A) \ra (\klist\tyapp A) \ra
(\klist\tyapp A) \ra \prop$. 
Note also that we would want the schematization to work in such a way
that each type instance that we generate of the definition block for
\gappend to be independent of all other instances of the same block.
A less obvious but still important observation is that we will need
type parameterization to work also at the \emph{clause level}, \ie, we
will want a concise, schematic way to signal the inclusion of all the
type instances of a clause within a \emph{single} definition block.
This capability is needed to accommodate schematic polymorphism in the
specification logic. 
To understand this, consider the definite clauses in the specification logic
that we saw for the predicate \append.
To make these work for lists of different types, we would have to
associate a type of the form $(\klist\tyapp A) \ra
  (\klist\tyapp A) \ra (\klist\tyapp A) \ra \omic$ with \append and we
would need to treat the definite clauses as if they represent the
collection of all their ground type instances.
This kind of an interpretation is already built at a computational level
into the language \LProlog.
To be able to \emph{reason} about such a specification, we would have
to embed the collection of all the instances of these definite clauses within
\Gee.
Under the embedding scheme that we have described in the previous
section, this means that we would have to be able to parameterize the
\prog clauses that encode each of the definite clauses for \append by
the type of the list elements.  

The above discussion motivates the following definitions.
\begin{definition}\label{def:schm_def_block}
A \emph{schematic clause} is an expression of the form
$\schmclause{\Psi}{\seq{x:\alpha}}{A}{B}$, where
$\clause{\seq{x:\alpha}}{A}{B}$ has the structure of a clause in \Gee
except that variables may appear in the types and $\Psi$ is a listing
of some of the type variables appearing in the clause.
All the terminology associated with clauses in \Gee carries over to
the schematic version.
A \emph{schematic definition block} comprises a finite sequence of
distinct type variables $\Psi'$, a finite set of predicate constants
$\{c^1:\tyschm{\Psi'}{\tau_1},\ldots,c^n:\tyschm{\Psi'}{\tau_n}\}$,
and a collection of schematic clauses each of which is
for some ${c^i}$.
A schematic clause and definition block of the forms shown are said to
be \emph{parameterized} by $\Psi$ and $\Psi'$ respectively. 
\end{definition}
%
%
%
\begin{definition}\label{def:schm_block}
A schematic definition block parameterized by $\Psi'$ with
associated predicate constants
$\{c^1:\tyschm{\Psi'}{\tau_1},\ldots,c^n:\tyschm{\Psi'}{\tau_n}\}$ is
\emph{well-formed} if, for every clause
$\schmclause{\Psi}{\seq{x:\alpha}}{A}{B}$ in it, it is the case that (a)~$\Psi$ is
disjoint from $\Psi'$, (b)~$A$ and $B$ are well-formed in $\Psi \cup
\Psi'$, (c)~all occurrences of $c^i$ in $A$ and $B$ are at the instance
$\cinst{c^i}{\Psi'}$, and (d)~all the type variables that occur in $B$ also
occur in $A$.
\end{definition}
\noindent The wellformedness definition above requires 
the constants $c^1, \ldots,$ $c^n$ introduced  by a schematic
definition block to be used at their ``defined types''
at every occurrence in the block.
This condition ensures the independence of each instance of such a
block from every other instance. 
If $\cal B$ is a schematic definition block, we will write
$\binst{\cal B}{\seq{\tau}}$ to represent an instance of the block
that is obtained by substituting the type expressions $\seq{\tau}$ for
the type variables that parameterize $\cal B$.
Subsequent to the block,
each $c^i$ is treated as having the type schema $[\Psi']\tau_i$
associated with it.
%
The requirement that all the type variables in the body of a
schematic clause also occur in the head has a technical motivation: it 
ensures that the type instance of the body is fixed as soon as the
type instance of the head is determined, a property that will become
important when we consider the construction of proofs.

A schematic definition block serves as an abbreviated
representation of a collection of definition blocks in \Gee that are
obtained as follows.
First, we instantiate the 
type variables that parameterize the block with concrete types.
Within the structure thus obtained, we generate all the versions of
each schematic clause by instantiating the type variables that
parameterize the clause with all available concrete types.
Note that both the collection of definition blocks and the collection of
clauses within each block that are generated in this way are sensitive
to the vocabulary of types in existence at a particular point.
However, the schematic proofs whose construction we will support will
be such that they will allow us to prove only those statements whose
instances have derivations in \Gee independently of the
  available type signature.

We adopt also a schematic view of the properties we would like to
prove in the context of a schematic definition.
A \emph{schematic formula} is an expression of the form
$\schmthm{A_1,\ldots,A_n}{F}$ in which $F$ is a well-formed formula
relative to the collection of type variables $A_1,\ldots,A_n$.
We say that such a formula is parameterized by the type variables
$A_1,\ldots,A_n$.
Given the ground types $\tau_1,\ldots,\tau_n$, we can generate the
formula $\apptysubst{\tau_1/A_1,\ldots,\tau_n/A_n}{F}$ in \Gee.
A schematic formula is considered a \emph{schematic theorem} only when
any of its type instances generated in this way is a theorem of \Gee. 
A schematic theorem not parameterized by any type variable
coincides with a theorem in \Gee.

\section{Schematizing Proofs}
\label{sec:schm-proofs}

Schematic theorems must be such that their type instances hold in
\Gee regardless of the type vocabulary.
While there can be different approaches to establishing such theorems,
our focus here will be on what we call \emph{schematic proofs}.
These are structures associated with schematic formulas 
that yield proofs in \Gee of type instances of the formulas simply
by instantiating the type variables that appear in them.
For us to be able to generate such a structure, it must be the case
that a type instance of the schematic formula has a proof in \Gee that
does not use information specific to that instance.
This is true of many proofs in \Gee.
For example, the proofs we sketched in Section~\ref{sec:abella} for
the different properties of \gappend did not depend on the element
type being $\iota$.
Our objective in this section is to lift the proof rules for \Gee to
apply to schematic formulas in such a way that they yield schematic
proofs.
The main challenge in realizing this objective is articulating a
schematic version of the \defL rule: unification plays a 
fundamental role in the formulation of this rule and unification over
simply typed $\lambda$-terms depends significantly on type
information~\cspc\cite{nadathur92typesshort}. 

\subsection{Schematizing sequents and the basic rules}
\label{subsec:schm-basic-rules}

The lifted versions of our proof rules will apply to \emph{schematic
  sequents} that have the form $\schmsequent{\Psi}{\Si}{\G}{B}$.
These sequents augment the ones in \Gee 
with a set $\Psi$ of type variables that binds the type variables in
$\Si$, $\G$ and $B$.
This set will remain unchanged throughout the
derivation of the sequent.
Thus, the variables in this set will
function as placeholders for arbitrary types but will be like ``black
boxes'' in that they will not allow us to look at or use the
particular structures of the types that fill them.
To prove a schematic formula $[A_1,\ldots,A_n]F$, we will need to derive
the schematic sequent
$\schmsequent{A_1,\ldots,A_n}{\emptyset}{\emptyset}{F}$. 

The schematic versions of the rules for the logical symbols in \Gee
are obtained essentially by adding a set of type variables,
represented by a schema variable such as $\Psi$, to the sequents that
form the premises and conclusions of the rules in \Gee.
For example, the following constitute schematic versions of the
quantifier rules:

\smallskip

\begin{center}
\begin{tabular}{c}
\infer[\schmallL]
      {\schmsequent{\Psi}{\Sigma}{\Gamma,\typedallx{\alpha}{x}{B}}{F}}
      {\schmsequent{\Psi}{\Sigma}{\Gamma,\subst{t}{x}{B}}{F}}
\quad
\infer[\schmsomeR]
      {\schmsequent{\Psi}{\Sigma}{\Gamma}{\typedsomex{\alpha}{x}{B}}}
      {\schmsequent{\Psi}{\Sigma}{\Gamma}{\subst{t}{x}{B}}}
\\[5pt]
\qquad \infer[\schmallR\; (x \notin \Sigma)]
      {\schmsequent{\Psi}{\Sigma}{\Gamma}{\typedallx{\alpha}{x}{B}}}
      {\schmsequent{\Psi}{\Sigma, x:\alpha}{\Gamma}{B}}
\\[5pt]
\qquad \infer[\schmsomeL\; (x \notin \Sigma)]
      {\schmsequent{\Psi}{\Sigma}{\Gamma,\typedsomex{\alpha}{x}{B}}{F}}
      {\schmsequent{\Psi}{\Sigma, x:\alpha}{\Gamma,B}{F}}
\end{tabular}
\end{center}

\smallskip

\noindent As before, $t$ must be a term of type $\alpha$ in these
rules and it must be constructed using only the available constants
and the symbols in $\Sigma$. 
It is easy to see the schematic nature of these rules: by
instantiating the premises and conclusions of each rule with any
substitution of ground types for variables in $\Psi$, we get a rule in
\Gee.
The rules for the remaining logical symbols have a similarly obvious
structure and quality.

\subsection{Schematic rules for fixed-point definitions}
\label{subsec:schm-def-rules}

The schematic forms of the rules for introducing atomic formulas in
sequents have to pay more careful attention to the interpretation of
type variables.
The following definition will be useful in formalizing these rules. 
\begin{definition}\label{def:red_schm_clause}
The \emph{reduced form of a schematic clause} $[\Psi]C$ that appears in a schematic
definition block parameterized by $\Psi'$ is $[\Psi'']C$ where
$\Psi''$ contains type variables from $\Psi$ and $\Psi'$ only if they
occur in $C$. Note that because the type variables in the body of the
clause must occur in its head, $\Psi''$ contains exactly the type
variables that occur in the head of $C$. We write $\reduceddef{\cal
  D}$ to represent a collection of the reduced forms of the schematic
clauses in the definition $\cal D$. 
\end{definition}

The main issue in formalizing the schematic version of the \defR rule,
which we denote by \schmdefR, is that we have to consider
instantiating the type variables in schematic clauses. This rule is
presented below.

\smallskip

\begin{center}
\begin{tabular}{c}
\infer[\schmdefR]
      {\schmsequent{\Psi}{\Sigma}{\Gamma}{A}}
      {\schmsequent{\Psi}{\Sigma}{\Gamma}{\apptysubst{\phi}{\appsubst{\theta}{B}}}}
\\[1pt]
$\schmclause{\Psi'}{\seq{x : \alpha}}{A'}{B} \in {\reduceddef{\cal D}}\ 
         \mbox{\rm and}\ \apptysubst{\phi}{\appsubst{\theta}{A'}} = A$
\end{tabular}
\end{center}

\smallskip

\noindent In this rule $\phi$ and $\theta$ are, respectively, a type
substitution for $\Psi'$ and a term substitution for $\seq{x}$ whose
range is well-formed with respect to $\Psi$. 

As we have already noted, the schematization of the \defL rule is
complicated by the fact that the structure of unifiers for terms in
the simply typed $\lambda$-calculus can depend on types. 
To circumvent this difficulty, we will limit ourselves to those situations
in which the \emph{complete set of unifiers (CSUs) have the
  same structure no matter how type variables are instantiated}.
At a practical level, such unifiers can be computed, for instance, by
using higher-order pattern unification that does not pay attention to
type information~\cspc\cite{miller91jlc,nipkow93lics}. 
Note that the described constraint on CSUs must hold independently of the
extent of the type vocabulary.
For this reason, we lift the notion of CSUs in
Definition~\ref{def:csu} to encompass unification of terms containing
type variables, where type variables are treated as ``frozen''
type-level constants.
Since there is an injection from the concrete types that can be
constructed under any possible extension to the existing type
vocabulary to the types that can be constructed from the existing type
vocabulary plus the infinite set of type variables, it suffices to
consider CSUs that are ``type-generic'' in the latter context.
\begin{definition}\label{def:type-generic-csu}
A \emph{type-generic complete set of unifiers} for two schematic terms $t_1$
and $t_2$ of identical type that are parameterized by the type
variables $A_1,\ldots,A_n$ 
is a collection $\Theta$ of substitutions such that
$\{\apptysubst{\tau_1/A_1,\ldots,\tau_n/A_n}{\theta} 
\sep \theta \in \Theta\}$ is a complete set of unifiers for
$\apptysubst{\tau_1/A_1,\ldots,\tau_n/A_n}{t_1}$ and
$\apptysubst{\tau_1/A_1,\ldots,\tau_n/A_n}{t_2}$ for any type
expressions $\tau_1,\ldots,\tau_n$.
We write \gencsu{t_1}{t_2} to (ambiguously) denote such a collection
$\Theta$.  
\end{definition}

We now describe conditions under which we can identify a schematic
version of case analysis over an atomic formula.
%
\begin{definition}\label{def:amenable}
Let ${\cal S} = (\schmsequent{\Psi}{\Si}{\G, A}{D})$ be a schematic
sequent in which $A$ is an atomic formula and let
\[{\cal C} =
(\schmclause{\Psi'}{\seq{x:\alpha}}{A'}{B})\]
be a schematic clause
named away from $\Si$ and such that  $\Psi'$ is disjoint from $\Psi$.
Then $\cal S$ is \emph{analyzable in a generic way with respect to the
clause $\cal C$ on $A$} and the analysis produces the corresponding set
$\gendefcases{\cal S}{\cal C}$ if one of the following conditions hold:
\begin{enumerate}
\item
  $A$ and $A'$ are not unifiable under any instantiation of type
  variables. 
  In this case, $\gendefcases{\cal S}{\cal C} = \emptyset$.

\item There is a type substitution $\phi$ for the variables in $\Psi'$
  whose range is well-formed in $\Psi$ and there is a type generic CSU
  $\Theta$ for $A$ and $\apptysubst{\phi}{A'}$ such that, for any type
  substitution $\phi'$ that makes $\apptysubst{\phi'}{A}$ and $\apptysubst{\phi'}{A'}$
  unifiable, there is a type substitution $\delta$ such that
  $\phi' = \tysubstcomp{\delta}{\phi}$.
  In this case, $\gendefcases{\cal S}{\cal C}$ denotes
  the set $
  \{ \schmsequent{\Psi}{\Sigma \theta}{\Gamma\theta, \appsubst{\theta}{\apptysubst{\phi}{B}}}{D\theta}\ \sep\ \theta \in \Theta \}.
  $

\end{enumerate}
$\cal S$ is \emph{amenable
to case analysis on $A$ against a schematic definition $\cal D$} if
it is analyzable with respect to every clause in $\reduceddef{\cal D}$
on $A$ and the case analysis then results in 
the set of sequents
\[\gendefcases{\cal S}{\reduceddef{\cal D}} = \{
{\cal S'} \sep\ {\cal C} \in \reduceddef{\cal D}, {\cal S'} \in
\gendefcases{\cal S}{\cal C}  \}.\]
\end{definition}
The conditions governing the analyzability of a schematic sequent with
respect to a schematic clause can be understood as follows.
If the first condition is satisfied, then we can use a generic rule
with an empty set of premises deriving from the clause being
considered.
If the second condition is satisfied and we can find a type
substitution and a type generic CSU satisfying its requirements, we
can once again deal generically with this clause: the premise sequents
that need to be considered in a \defL rule for a type instance of the conclusion
sequent in \Gee will be a type instance of the set of sequents
produced by the analysis.  

Using the above definition, a schematic version of the definition
left rule can be formulated as follows:  

\smallskip

\[
\begin{array}{c}
\infer[\schmdefL]
      {\schmsequent{\Psi}{\Sigma}{\Gamma, A}{F}}
      {\gendefcases{\schmsequent{\Psi}{\Sigma}{\Gamma, A}{F}}{\reduceddef{\cal D'}}}\\
{\cal D'}\ \mbox{\rm is a variant of}\ {\cal D}\ 
  \mbox{\rm named away from $\Sigma$ and $\Psi$}
\end{array}
\]

\smallskip

\noindent The rule is governed by a proviso: it can be used only if the
lower sequent is amenable to case analysis on $A$ against
the operative (schematic) definition $\cal D$. 

In Section~\ref{sec:schm-abella}, we will illustrate how this rule can
be used in constructing schematic proofs.
In understanding the content of the rule, it is useful
to also see situations in which it is \emph{not} applicable. 
Suppose that we have the constant $\kp:\tyschm{A}{A \to \iota}$ 
and the clause
$\clause{x:A}{\cinst{\keq}{A}\app x\app x}{\top}$ for the predicate
  $\keq:\tyschm{A}{A \ra A \ra \prop}$ and we want to prove the formula
\[
\schmthm{A,B}{\rforallx{(x:A) (y:B)} {\cinst{\keq}{\iota}\app
    (\cinst{\kp}{A}\app x)\app (\cinst{\kp}{B}\app y) \rimp \rfalse}}
\]
This task can be reduced to proving the following schematic sequent:
\[
\schmsequent{A,B}{x:A,y:B}
            {\cinst{\keq}{\iota}\app (\cinst{\kp}{A}\app x)\app (\cinst{\kp}{B}\app y)}
            {\rfalse}
\]
We might want to apply the \schmdefL to the only assumption atom here
but unfortunately neither of the conditions in
Definition~\ref{def:amenable} holds.
On the one hand, the atomic formula is unifiable with the head of the
clause for \keq in the case that $A$ and $B$ are set to the same
type.
On the other hand, a type generic CSU does not exist for the atomic
formula and a type instance of the head of the clause for \keq because
the unifiability of their instances depends on  whether or not the
same types are substituted for $A$ and $B$.

\subsection{The schematic induction rule}
\label{subsec:schm-ind-rules}
A schematic definition block is said to be \emph{accommodative to
  induction} if every schematic clause within the block is
parameterized by an empty sequence of type variables.
If a block has this character, then each of its schematic clauses
gives rise to exactly one clause in the definition block in \Gee that
is generated by instantiating the type variables parameterizing the
block with concrete types.
Thus, every instance of such a schematic definition block has a finite
number of clauses and hence all these clauses can be designated as
inductive ones.
In consonance with this observation, we allow a schematic definition
block that is accommodative to induction to be designated as
inductive.

A schematic version of the induction rule $\indL$ is then given as
follows:

\smallskip

\begin{center}
\small
\begin{tabular}{c}
  \infer[\schmindL]
        {\schmsequent{\Psi}{\Sigma}{\Gamma, \cinst{p}{\seq{\tau}}\app \seq{t}}{F}}
        {
          \begin{array}{c}
          \{\schmsequent{\Psi}{\seq{x :\alpha}}{B[S/p]}{S\app \seq{t'}}\ \sep\ 
          \clause{\seq{x : \alpha}}{\cinst{p}{\seq{\tau}}\app \seq{t'}}{B} \in \binst{\cal B}{\seq{\tau}}\}
          \\
          \schmsequent{\Psi}{\Sigma}{\Gamma, S \app \seq{t}}{F}
          \end{array}
        }
    \\[1pt]
    \begin{tabular}{l}
      The schematic block $\cal B$ for $p$ is inductive\\
      and has only $p$ associated with it
    \end{tabular}
\end{tabular}
\end{center}

\smallskip

\noindent 
This rule essentially parameterizes the conclusion and
premises of the \indL rule with $\Psi$.
As in the case of the \indL rule, the type of the term that
instantiates $S$ must be identical to that of
$\cinst{p}{\seq{\tau}}$.
Moreover, the type variables appearing in this term must be contained
in $\Psi$.
It is easy to see that this rule reduces to \indL under the
instantiation of $\Psi$ with ground types. 
Although the rule that we have presented here is applicable only to
schematic definition blocks that have defining clauses for exactly one
predicate, it can be generalized to deal with predicates that are defined
mutually inductively.
We elide a discussion of the generalized rule in this paper.

\subsection{Soundness of the schematic proof system}
\label{subsec:soundness}

The schematic nature of the lifted proof system we have presented can
be articulated via a soundness theorem.

\begin{theorem}\label{thm:schm_sound}
  Let $\Pi$ be a derivation of the schematic sequent
  \[\schmsequent{\Psi}{\Si}{\G}{B}\]
  that is constructed using the 
  schematic proof rules we have described. If $\Phi$ is a substitution
  that maps each variable in $\Psi$ to a ground type, then $\Pi[\Phi]$
  is proof in \Gee for the sequent
  $\sequent{\Si[\Phi]}{\G[\Phi]}{B[\Phi]}$. 
\end{theorem}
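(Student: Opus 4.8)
The plan is to prove the theorem by a straightforward induction on the structure of the schematic derivation $\Pi$, showing that applying a ground type substitution $\Phi$ to every sequent in $\Pi$ yields a valid derivation in $\mathcal{G}$. The key observation driving the whole argument is that type substitution commutes with term-level substitution: for any term-level substitution $\theta$ and type substitution $\phi$ (or ground $\Phi$), we have $\apptysubst{\Phi}{(\appsubst{\theta}{t})} = \appsubst{\theta'}{(\apptysubst{\Phi}{t})}$ where $\theta'$ is obtained by applying $\Phi$ to the range of $\theta$. I would state and verify this commutation lemma first, since it is the workhorse in every case that involves a substitution (the quantifier rules, $\schmdefR$, $\schmdefL$, $\schmindL$). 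I would also note that because $\Psi$ is fixed throughout the derivation and all terms in the sequents are well-formed relative to $\Psi$, the term $\apptysubst{\Phi}{t}$ contains no type variables, so $\sequent{\Si[\Phi]}{\G[\Phi]}{B[\Phi]}$ is a genuine sequent of $\mathcal{G}$.

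For the inductive step, I would go rule by rule. The propositional and structural rules are immediate since $\Phi$ simply distributes over the formulas and leaves $\Sigma$ untouched. For the schematic quantifier rules ($\schmallL$, $\schmsomeR$, $\schmallR$, $\schmsomeL$), the image under $\Phi$ is exactly the corresponding $\mathcal{G}$ rule: the witness $t$ maps to $\apptysubst{\Phi}{t}$, which is still a term of the appropriately instantiated type over $\Si[\Phi]$, and the commutation lemma handles $\subst{t}{x}{B}$; the eigenvariable provisos are preserved because $\Phi$ does not alter $\Sigma$ or introduce fresh term variables. For $\schmdefR$, the premise is $\schmsequent{\Psi}{\Sigma}{\Gamma}{\apptysubst{\phi}{\appsubst{\theta}{B}}}$ with $\apptysubst{\phi}{\appsubst{\theta}{A'}} = A$; applying $\Phi$ gives a clause instance $\cinst{\cdot}{\cdot}$ of the definition block in $\mathcal{G}$ determined by the ground type substitution $\tysubstcomp{\Phi}{\phi}$ together with the term substitution $\appsubst{\Phi}{\theta}$, so the image is a legitimate use of $\defR$. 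The $\schmindL$ case is similarly mechanical: the side condition that the block is accommodative to induction guarantees each schematic clause is parameterized by no type variables, so $\binst{\cal B}{\seq{\tau}}[\Phi]$ is exactly the inductive definition block in $\mathcal{G}$, and the invariant $S$, being ground-type once we note its type variables lie in $\Psi$, maps to a valid $\mathcal{G}$ invariant.

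The main obstacle, and the case deserving the most care, is $\schmdefL$. Here I must show that $\gendefcases{\schmsequent{\Psi}{\Sigma}{\Gamma, A}{F}}{\reduceddef{\cal D'}}[\Phi]$ is precisely (or at least covers, via the CSU requirements) the set $\defcases{\sequent{\Sigma[\Phi]}{\Gamma[\Phi], A[\Phi]}{F[\Phi]}}{{\cal D'}[\Phi]}$ needed for a valid $\defL$ application in $\mathcal{G}$. This is where Definition~\ref{def:amenable} does its work. For each schematic clause $\cal C$ in $\reduceddef{\cal D'}$, if condition (1) held then $A$ and $A'$ were non-unifiable under every type instantiation, so in particular $A[\Phi]$ and $A'[\Phi]$ do not unify, contributing no cases on either side. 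If condition (2) held, the clause's type parameters are fixed by $\phi$ and the type-generic CSU $\Theta$ has the property that, instantiating by $\Phi$, $\{\theta[\Phi] \sep \theta \in \Theta\}$ is a complete set of unifiers for $A[\Phi]$ and $(\apptysubst{\phi}{A'})[\Phi]$ by Definition~\ref{def:type-generic-csu}; moreover the minimality condition on $\phi$ (that any $\phi'$ unifying $A$ and $A'$ factors through $\phi$) ensures $\Phi$ itself, restricted appropriately, picks out exactly this clause-instance among the type instances of $\cal C$ in ${\cal D'}[\Phi]$, with no other type instance of $\cal C$ contributing new unifiers. Combining these across all clauses shows the image of the schematic rule is a well-formed $\defL$ inference with the right premise set, and the inductive hypothesis applied to each premise derivation finishes the argument. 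I would close by remarking that the variant/naming-away conditions transfer verbatim, since $\Phi$ affects only types and not the term variables being kept distinct.
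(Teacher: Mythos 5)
Your proposal is correct and follows essentially the same route as the paper's own (sketched) proof: induction on the derivation, routine treatment of the logical, $\schmdefR$, and $\schmindL$ rules (with the observation that type instantiation preserves well-typedness, which your commutation lemma makes precise), and a careful analysis of $\schmdefL$ using the two conditions of Definition~\ref{def:amenable} and the type-generic CSU to show the instantiated premise set is exactly what $\defL$ requires in \Gee. Your write-up is in fact somewhat more detailed than the paper's sketch, particularly on why the factoring condition on $\phi$ rules out contributions from other type instances of a clause.
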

\begin{proof} Only a sketch is provided here.
  The proof proceeds by induction on the height of $\Pi$, 
  considering the different possibilities for the last (schematic)
  inference rule in the derivation.
  The argument follows an obvious pattern in most cases: We 
  invoke the induction hypothesis on the premises of the last rule to
  determine that the type instances of the premise derivations will be
  derivations in \Gee of the type instances of the corresponding
  premises.
  These derivations can then be extended by using a type instance
  of the last rule in the schematic proof to obtain a derivation in
  \Gee of the desired type instance of the concluding sequent in
  $\Pi$; when the last rule is either $\schmsomeR$ or $\schmallL$, we
  will need the additional observation here that well-typedness of terms is
  preserved under type instantiation.
  It is now easy seen that the proof that has been generated in
  this way is itself the relevant type instance of $\Pi$.

  The only case that needs further elaboration is that 
  when the last rule is $\schmdefL$.
  Given any schematic definitional clause, the proviso of this rule
  ensures that if the matching between the atomic formula being
  analyzed with the head of the 
  clause fails (\ie, condition (1) of Definition~\ref{def:amenable} is
  satisfied), then it will also fail under the instantiation of type
  variables. The proviso also ensures that when the matching succeeds
  (\ie, condition (2) of Definition~\ref{def:amenable} is satisfied),
  it also succeeds under the type instantiation. Moreover, the type
  generic natural of the matching ensures the structure of the premise
  generated from the matching is preserved under the type
  instantiation. These observations allow us to complete the argument
  in this case as well using the previously described pattern.
\end{proof}
The proof of the above theorem is constructive. Its procedural
interpretation provides us a function for constructing proofs in \Gee
from the schematic proofs.
A consequence of the theorem is that if a schematic formula
$[A_1,\ldots,A_n]F$ has a schematic proof, then, for any 
ground types $\tau_1,\ldots,\tau_n$,
$F[\tau_1/A_1,\ldots,\tau_n/A_n]$ is provable is \Gee.



\section{Constructing Schematic Proofs}\label{sec:schm-abella}

We have implemented an extension to \Abella that supports schematic
polymorphism based on the ideas described in this paper. 
We present some examples below that illustrate the new capabilities of
this system and also highlight some of the limitations of the form of
polymorphism it realizes.

We first consider a schematized version of the \gappend
predicate from Section~\ref{subsec:proofs-in-abella}.
Associating the type schema
$\tyschm{A}{\klist\app A}$ with \nil and $\tyschm{A}{A \ra \klist\app A \ra
  \klist\app A}$ with $\scons$,
this predicate is defined by the following clauses in a
block with the predicate signature $\kapp:\tyschm{A}{\klist\app A \to
  \klist\app A \to \klist\app A \to \prop}$:
\begin{tabbing}
\quad\=\quad\=\kill
\>$\clause{l:\klist\app A}{\cinst{\gappend}{A}\app \nil\app l\app l}{\top}$\\
\>$\clauseheader{x:A}\clauseheader{l_1,l_2,l_3:\klist\app A}$\\
\>\>$\clausesans{\cinst{\gappend}{A}\app (x\scons l_1)\app l_2\app (x\scons l_3)}
    {\cinst{\gappend}{A}\app l_1\app l_2\app l_3}$
\end{tabbing}
Since neither of the clauses is parameterized by a type, this block
can be designated as inductive.
We can get definition blocks in \Gee
by instantiating the variable $A$ in this
schematic definition with a concrete type. 
Note that the blocks so generated are independent of each other and
also finite in size.
As such, the resulting ``instance'' definitions for \gappend can also
be treated as inductive in \Gee.

In the context of the schematic definition, we can write the following
schematic formula that encodes the discussed functional property of
\gappend:
\begin{tabbing}
\quad\=\qquad\=\kill
\>$[A]\rforallxheader{l_1,l_2,l_3,l_4:\klist\app A}$\\
\>\>${\cinst{\gappend}{A}\app
  l_1\app l_2\app l_3 \supset \cinst{\gappend}{A}\app l_1\app l_2\app l_4 \supset
            l_3 = l_4}$.
\end{tabbing}
The \schmindL rule provides the basis for lifting the annotated style
of induction discussed in Section~\ref{subsec:proofs-in-abella} to the
schematic setting. 
Letting $F$ denote the formula 
\begin{tabbing}
\quad\=\qquad\=\kill
\>$\rforallx{l_1,l_2,l_3,l_4:\klist\app A}{\cinst{\gappend}{A}^*\app
  l_1\app l_2\app l_3 \supset \cinst{\gappend}{A}\app l_1\app l_2\app l_4 \supset
            l_3 = l_4}$,
\end{tabbing}
using the induction tactic will allow us to reduce the task of proving
the formula of interest to constructing a derivation for the schematic
sequent
\begin{tabbing}
\quad\=\qquad\=\kill
\>$\schmsequentheader{A}{(l_1,l_2,l_3,l_4:\klist\app A)}$\\
\>\>$\schmsequentbody{F, \cinst{\gappend}{A}^@\app l_1\app l_2\app l_3, 
                  \cinst{\gappend}{A}\app l_1\app l_2\app l_4}{l_3 = l_4}.$
\end{tabbing}
This sequent is amenable to case analysis on
$\cinst{\gappend}{A}^@\app l_1\app l_2\app l_3$ against the operative
definition: it is easy to see that there is a type generic CSU for the
atom with the head of the reduced form of each \gappend clause
instantiated with $A$, 
thereby satisfying the second condition in
Definition~\ref{def:amenable}.
The remainder of the proof follows the structure of the one sketched
in Section~\ref{subsec:proofs-in-abella}.
Note that instantiating the type variable $A$ with $\iota$ in this
schematic proof will in fact yield the earlier proof. 

The next example concerns the append relation defined in the
specification logic.
Definite clauses can also be schematized as described, e.g., in
\cite{nadathur92typesshort}.
Encoding such definite clauses for \append yields the following
schematic clauses for \prog: 
\begin{tabbing}
\quad\=\qquad\=$\prog\app$\=$(\cinst{\append}{A}\app (x\scons l_1)\app l_2\app (x \scons l_3))$\quad\=\kill
\>$\schmclause{A}{l:\klist\app A}{\prog\app (\cinst{\append}{A}\app \nil\app l\app l)\app \strue}{\top}$\\
\>$[A]\clauseheader{x:A,l_1:\klist\app A, l_2:\klist\app A, l_3:\klist\app A}$\\
\>\>$\prog\app (\cinst{\append}{A}\app (x\scons l_1)\app l_2\app (x \scons l_3))$\\
\>\>\>$(\atmconst\app (\cinst{\append}{A}\app l_1\app l_2\app l_3))$\>$\rdef \quad \top$
\end{tabbing}
%
%
The type schema $\tyschm{A}{\klist\app A \ra \klist\app A \ra
  \klist\app A \ra \omic}$ is associated with \append here. 
Each type instance of a clause above yields a \prog clause in \Gee
that encodes the corresponding type instance of an \append clause in
the specification logic.
Combining the schematic clauses with the ones for \prove seen earlier
gives us a (schematic) encoding of the specification logic
definition.
The functional nature of the ``polymorphic'' definition of \append in
the specification logic can now be expressed by the following
schematic formula:
\begin{tabbing}
  \quad\=\qquad\=\kill
  \>$[A]\rforallxheader{l_1,l_2,l_3,l_4:\klist\app A}$\\
  \>\>${\sprove{\cinst{\append}{A}\app l_1\app l_2\app l_3} \supset
    \sprove{\cinst{\append}{A}\app l_1\app l_2\app l_4} \supset l_3 = l_4}.$
\end{tabbing}
A schematic proof can be constructed for this formula, from which the
one discussed in Section~\ref{subsec:proofs-in-abella} can be obtained
as a type instance; we omit the details that should be easy to fill in
given the discussions in Section~\ref{subsec:encode-spec-logic}.

Although many useful schematic theorems can be established using our
schematic proof system, there are some that lie beyond its
capabilities. 
Towards understanding the content of this observation, let us consider
the schematic formula 
%
\begin{tabbing}
  \quad\=\qquad\=\kill
\>$\schmthm{A,B}{\rforallx{x:A, f:A \to B}{\rexistsx{y:B}{}}}$\\
\>\>$(\cinst{\keq}{\iota}\app (\cinst{\kp}{A}\app x)\app (\cinst{\kp}{B}\app y))
             \ror ((\cinst{\keq}{\iota}\app (\cinst{\kp}{A}\app x) \app
             (\cinst{\kp}{B}\app y)) \rimp \rfalse)$
\end{tabbing}
which uses the predicates \keq and \kp introduced in
Section~\ref{subsec:schm-def-rules}. 
We see that every type instance of this formula must hold based on the
following reasoning.
When $A$ and $B$ are instantiated by the same type, we instantiate the
existentially quantified variable $y$ with the same value that
instantiates the universally quantified variable $x$ and then observe
that the left branch of the disjunction holds.
If $A$ and $B$ are instantiated with different types then
corresponding to each term $t$ that is chosen for $x$ we pick the term
$(f\app t)$ for $y$ and then observe that the right branch of the 
disjunction must hold.
%
However, the structure of above argument depends on the type instance
under consideration and hence it cannot be captured by a schematic proof.
Moreover it can be checked that it is impossible to provide a
schematic proof for this formula.

\section{Covering the Full Logic}
\label{sec:fulllogic}

In the discussions up to this point, we have considered only a subset
of the specification logic \HH and, correspondingly, of the logic
\Gee.
We did this so that we could focus on the main technical issues
relating to our ideas for introducing schematic polymorphism.
The features of \HH and \Gee that we have left out of the discussion
are, however, central to the usefulness of \Abella in its
application domain: in particular, they are vital to the
logical treatment of higher-order abstract syntax.
It is important therefore that the techniques described for the
simplified logics be extendable to the fully featured logics.
This is indeed the case, as we try to demonstrate in this section.
A complete development for the full logic that elaborates on the ideas
discussed here can be found in Wang's doctoral
thesis~\cite{wang16phd}.
Our implementation of ``schematic \Abella'' is, in fact, based on this
complete development.

\subsection{Additional logical features of \HH and \Gee}

The features that we have omitted in the earlier discussions are the
$\nabla$ quantifier in \Gee and the hypothetical and generic forms for
goals in \HH.
These features conspire to provide an inductive treatment of
syntax even in the presence of binding constructs.
We describe the richer forms to the logics and also motivate their
usefulness below.

The enrichment to Horn clauses that results in \HH is easy to describe
at the syntactic level: goal formulas are allowed to contain
universal quantifiers and implications of the form $D \simply G$ where
$D$ is a definite clause.
Note that this change impacts also the syntax of definite clauses
which, in the enriched form, are referred to as hereditary Harrop formulas.
Permitting universal quantifiers in goals implies that the vocabulary for
constructing terms might change in the course of a derivation.
This aspect is accounted for by changing the form of the specification
logic sequent to $\hhsequent{\Xi}{\Gamma}{G}$, where $\Xi$ represents
an eigenvariable context.
The earlier present rules for the specification logic are modified in
an obvious way to take into account the richer structure for
sequents.
We additionally have the following rules to treat the new forms for
goal formulas:

\smallskip

\begin{center}
\begin{tabular}{cc}
\infer[\impR]
      {\hhsequent{\Xi}{\Gamma}{D \simply G}}
      {\hhsequent{\Xi}{\Gamma, D}{G}}
&
\infer[\forallR]
      {\hhsequent{\Xi}{\Gamma}{\forallx{x:\alpha}{G}}}
      {\hhsequent{\Xi,x:\alpha}{\Gamma}{G}}
\end{tabular}
\end{center}

\smallskip

The new forms for goals in \HH add a dynamic character to both the
signature and the program context that is relevant in a derivation.
This feature turns out to be quite useful in capturing recursion over
binding constructs.
When carrying out an analysis over the body of such a construct, it is
necessary to distinguish occurrences of the bound variable and it may
also be necessary to attribute specific properties to the bound
variable for the duration of the analysis.
These are exactly the capabilities provided by universal quantifiers
and implications in goals.
We refer the reader to \cite{miller12proghol}, amongst other sources,
for concrete examples that use these observations in constructing
specifications related to syntactic structure in which binding is an
important component.

In reasoning about syntax, it is often necessary to treat entities
such as bound variables as atomic, unanalyzable components.
The universal quantifier provides this ability in \HH.
However, this quantifier cannot be used for a similar purpose in \Gee
because the logic gives it an extensional reading; this interpretation
is manifest, for example, in the \defL rule which considers the different 
ways in which eigenvariables may be instantiated towards making an
assumption formula hold.
The $\nabla$ quantifier in \Gee can be understood as a new ``generic''
quantifier that fills this gap.\footnote{This quantifier has other
  uses, such as in encoding the uniqueness of names. We defer a more
  detailed discussion of these aspects to, \eg, \cite{baelde14jfr}.}
Formally, we may construct a proof of a sequent in which the formula
$\nablax x {(B\app x)}$ occurs on the right by picking a new constant
$c$ that does not appear in the sequent and then proving the sequent
that results from replacing $\nablax x {(B\app x)}$ with $(B\app
c)$.
The constants that are to be used in this way in proofs belong to a
special category called the \emph{nominal constants}.
The treatment of $\nabla x. (B\app x)$ when it appears on the left
side of the sequent, \ie, as an assumption formula, is symmetric: we
get to use $(B\app c)$ where $c$ is a fresh nominal constant as an
assumption instead.

In addition to their use in formulas, the $\nabla$ quantifier can also
be used in the head of a clause in a definition in \Gee~\cite{gacek11ic}.
More specifically, the full form for such clauses is the following:
\[ \clause{\seq{x:\alpha}}{(\nablax {\seq{z:\tau}}A)} {B}. \]
In generating instances of such a clause, the $\nabla$ quantifiers
over the head may be instantiated by any collection of distinct
nominal constants of suitable types.
Further, the universal quantifiers over the clause may be instantiated
by arbitrary terms of the requisite types with the proviso that they
must not contain any of the nominal constants used to instantiate the
$\nabla$ quantifiers.
Thus, the order of the quantifiers facilitates the encoding of
dependency information.
For example, consider the (extended) clause
\[\clause{t : \iota}
         {(\nablax {x:\iota} {\kfresh\app x\app t})} {\rtrue};\]
$\kfresh$ is assumed to be a predicate of type $\iota \to \iota \to
\prop$ here.  
This clause codifies the requirement that $(\kfresh\app x\app t)$
holds exactly when $x$ is a nominal constant that does not appear in
the term $t$.
This capability has a general use in the context of reasoning about
syntactic structures in the presence of binding: it allows us to make
explicit properties such as the distinctness of variable occurrences
that are captured by different binders and the uniqueness of
assignments to such variables when these variables are represented by
nominal constants.

\subsection{Schematization applied to the full system}
\label{subsec:schm-fulllogic}

The additional features in \HH and \Gee interact in a benign way with
our ideas related to parameterization based on types.
In fact, parameterization provides a more systematic treatment of one
aspect in comparison with the simply typed version of the system.
This aspect concerns the encoding of \HH in \Gee.
We discuss these matters below.

Following the lines described in
Section~\ref{subsec:encode-spec-logic}, the encoding of \HH is
realized by lifting its signature into \Gee and by capturing its
derivability relation in a definition.
In encoding the derivability relation, we now also have to
account for the changing nature of the signature and the program
context. 
Modelling the changes to the signature is simplified by the presence
of the $\nabla$ quantifier in \Gee: this quantifier can be used to
introduce a nominal constant that implicitly encodes the desired
signature enhancement.
To deal with the former aspect, we augment the \prove predicate with
an additional parameter that represents a list of the specification
logic clauses that are added dynamically in the process of searching
for a derivation.

Playing the above ideas out in detail leads to \prove now having the
type $\typenat \ra (\klist\tyapp \omic) \ra \omic \ra
\prop$.\footnote{We assume that the type constructor for lists and the
  (schematic) list constructors have been added to the vocabulary in
  this discussion.}
%
Further, the clauses defining this predicate change to the following:
\begin{tabbing}
\quad\=\quad\=\kill
\>$\clause{d:\omic, l: \klist\tyapp \omic}{\member\app d\app (d ::_\omic
  l)}{\top}$\\ 
\>$\clause{d, d':\omic, l: \klist\tyapp \omic}{\member\app d\app (d'
  ::_\omic l)}{\member\app d\app l}$\\[5pt]
\>$[A]\clauseheader{n:\typenat,l:\klist\tyapp \omic, d:A \ra \omic,a:\omic,t:A}$\\
\>\>$\clausesans{\backchain\app n\app l\app (\sfall_A\app d)\app a}
                {\backchain\app n \app l\app (d\app t)\app a}$\\
\>$\clauseheader{n:\typenat,l:\klist\tyapp \omic, g:\omic,a:\omic}$\\
\>\>$\clausesans{\backchain\app n\app l\app (g\simply a)\app a}
                {\prove\app n \app l\app g}$\\[5pt]
\>$\clause{n:\typenat, l : \klist\tyapp \omic}{\prove\app n\app l\app \strue}{\top}$\\
\>$\clauseheader{n:\typenat,l:\klist\tyapp \omic,g_1:\omic,g_2:\omic}$\\
\>\>$\clausesans{\prove\app (s\app n)\app l \app (g_1 \sconj g_2)} 
                {\prove\app n\app l \app g_1 \rand \prove\app n\app l \app g_2}$\\
\>$\clauseheader{n:\typenat,l:\klist\tyapp \omic,d:\omic,g:\omic}$\\
\>\>$\clausesans{\prove\app (s\app n)\app l \app (d \simply g)}                
                {\prove\app n\app (d ::_\omic l)\app g}$\\
\>$[A]\clauseheader{n:\typenat,l:\klist\tyapp \omic,d:A \ra \omic}$\\
\>\>$\clausesans{\prove\app (s\app n)\app l\app (\sfall_A\app d)}                
                {\nablax{x:A}{\prove\app n\app l\app (d\app x)}}$\\
\>$\clauseheader{n:\typenat,l:\klist\tyapp \omic,a:\omic, g:\omic}$\\
\>\>$\clausesans{\prove\app (s\app n)\app l \app (\atmconst\app a)} 
     {(\prog\app a \app g) \rand (\prove \app n\app l \app g)}$\\
\>$\clauseheader{n:\typenat,l:\klist\tyapp \omic,a:\omic}$\\
\>\>$\clausesans{\prove\app (s\app n)\app l\app (\atmconst\app a)}                
                {\member\app d\app l \rand \backchain\app n\app l\app
                  d \app a}$
\end{tabbing}
The constants $\backchain$ and $\member$ used here have the types
$\typenat \ra (\klist\tyapp \omic)\ra \omic\ra \omic\ra \prop$ and
$\omic\ra (\klist \tyapp \omic) \ra \prop$, respectively, and $\sfall$
has the schematic type $[A](A \ra \omic) \ra \omic$. 
The clauses for $\backchain$ encode backchaining on the definite clauses
added dynamically during the derivation.

The clause for \prove that treats universal goals and the clause for \backchain that
corresponds to instantiating a specification logic clause illuminate
the usefulness of the schematic polymorphism developed in this paper
in realizing a hygienic encoding of \HH. These clauses must treat
quantification over variables of all possible types. The ability to
parameterize the clause by the type of the variable results in a
precise encoding of this fact, an improvement over the existing
version of \Abella in which this aspect is treated in an ad hoc way in
the implementation.

The addition of $\nabla$ to the collection of logical symbols of
$\Gee$ does not complicate the description of schematic proof rules
for these symbols.
The only remaining aspect is the schematization of the rules
pertaining to the enriched form for definitions.
Here we build, again, on the version of the rules described for the
simply typed case.
The schematic version of the (appropriate) \defR and induction rules
are easily obtained.
To consider the \defR rule as an example, in matching the atomic
formula in the sequent with the head of a clause, we now have to also
consider substitutions for the $\nabla$ quantified variables in the
head.
However, this is just as in the simply typed case and it is dealt with
in a way that is orthogonal to the treatment of type variables.

There are more details to address in schematizing the \defL rule for
the extended form for definitions but this can be done as we now
indicate.
One requirement is that we must generalize the definition of
type-generic CSUs to terms containing nominal constants.
The critical observation in doing this is that we can treat
nominal constants whose types might contain type variables in much the
same way that we have treated ordinary constants.
Once we have determined this, we are able to use the original
definition for type-generic CSUs
(Definition~\ref{def:type-generic-csu}) in the new context. 
Next, we need to generalize Definition~\ref{def:amenable}, which
describes the analysis of a sequent ${\cal S}$ with respect to a
clause ${\cal C}$, to the case where the clause has the (more general)
form $(\schmclause{\Psi'}{\seq{x:\alpha}}{(\nablax{\vec{z}}{A'})}{B})$.
This is realized through two steps.
First, we describe the enumeration of cases based on the possible
forms of $A'$ resulting from instantiating $\vec{z}$ with nominal
constants, with the possibility that these nominal constants may be used
to instantiate variables in $\cal S$.
This kind of enumeration and its algorithmic treatment through
unification has been considered elsewhere~\cite{gacek11ic} and it is,
in any case, orthogonal to the schematization of \defL.
Second, for each case generated in this way, we require that $S$
be analyzable in a generic way with respect to $\cal C$ in the
sense that condition (1) or (2) in Definition~\ref{def:amenable} is
satisfied.
Once Definition~\ref{def:amenable} has been generalized through these
two steps, it can be used just as in Section~\ref{sec:schm-proofs} 
to describe a schematized version of the \defL rule for definitions in
the full form that is permitted for them in \Gee.

\section{Related Work and Conclusion}
\label{sec:comparison}

The approach we have used in this paper to realize polymorphism by
parameterizing the types of variables, constants and predicates has
been influenced by previous work on functional languages (\eg, see
\cite{damas82,milner78jcss}) and logic programming languages (\eg, see
\cite{hanus89iclpshort,laksham91ilps,mycroft84ai,nadathur92typesshort}).
At a spiritual level, our development is closest to that in
\cite{nadathur92typesshort}, where the parameterization is treated as
a schematic one, to be understood eventually by a translation into a
simply typed (logical) language. 
Despite the similarities in the underlying idea, we note that the context
of its application and, hence, the technical challenges in making it
work are significantly different.
The focus in a logic programming language, for example, is on using a
logic for computation, whereas the main emphasis in this paper has been
to get schematic polymorphism to work in constructing proofs in a
logic that includes case analysis over fixed-point definitions.
Moreover, we have had the additional concern of enabling reasoning in
this logic over programs in a logic programming language that itself
supports schematic polymorphism.


There are other systems that support higher-order abstract syntax and
that provide reasoning capabilities similar to \Abella.
Two specific systems that do this are  Twelf~\cite{pfenning99cade} and
Beluga~\cite{pientka10ijcarshort}.
These systems currently lack support for polymorphism.
We believe that the ideas presented in this paper can be used to endow
them as well with this feature.
To understand this point, note that the only major difficulty in
implementing schematic polymorphism in Twelf or Beluga lies in the
formalization of case analysis in a way that is not sensitive to type
information.
Note also that, in order to be able to provide faithful encodings of
object systems, any such analysis must be based on the structures of
terms.
With this understanding of the role of terms and some reflection on
the intended fixed-point reading of specifications in these systems,
it becomes clear that unification plays a vital role in realizing case
analysis.
However, unification in the context of typed $\lambda$-calculi is
typically sensitive to typing information and this turns out to be the
main challenge to carrying out case analysis in a schematic way when
type parameterization is permitted.
The key technical novelty of our work is to treat type
variables as unknown entities (or ``black boxes'') and to use the
result of a unification computation only when it can be produced
without obtaining any further information about these type
entities.
Using this approach, it should be possible to extend case analysis to
versions of these other systems that permit type parameterization.
We believe that this is a direction that is worthy of further
exploration because of the light-weight way in which it is able to
support a useful form of polymorphism; the method piggy-backs on
existing mechanisms and is therefore implementable with limited
additional effort. 

An alternative approach to achieving polymorphism is through a module
system and instantiation of modules. Rabe and
Sch{\"u}rmann~\cite{rabe2009} developed a module system for LF which
is the specification language of Twelf. The idea is to group LF
declarations and definitions into \emph{signatures}. To make use of a
signature in another context, they apply to the signature
\emph{signature morphisms} which are mappings from constants in the
signature to terms in the target context. Like our work, their module
system preserves the logical foundation of Twelf because all the extra
devices introduced to implement the module system are elaborated into
the core LF. For reasoning they simply use the mechanisms provided by
\Twelf such as coverage checking on these elaborated
definitions. However, as they have noted in their paper, signature
morphisms may not preserve the validity of the checking
process. Therefore, a theorem proved in a given module may not hold under
an instantiation of that module. We observe that coverage checking in
Twelf, similar to case analysis in Abella, makes use of
unification. Therefore, we believe that the idea of schematic
polymorphism that we have developed can be used to generalize coverage
checking in a way that is stable under signature morphisms. 

In general-purpose theorem provers such as Coq and Isabelle, 
reasoning principles such as case analysis and induction are derived
from mechanisms for defining inductive data types (in Coq) or
algebraic data types interpreted as inductively defined sets (in
Isabelle/HOL).
By parameterizing such definitions with type variables, we can obtain
a ``type-generic'' way to support case analysis and induction.
Because of these differences, especially because the reasoning
principles are not dependent in a fundamental way on the inductive
structure of terms, our work on realizing schematic polymorphism does
not translate directly to the context of these general-purpose theorem
provers.
However, there has been work on exploiting the benefits of
higher-order abstract syntax and the two-level logic approach within
general-purpose theorem provers by encoding a specification logic that
supports higher-order abstract syntax as a library and then reasoning
about specifications written in this logic through the encoding.
One example of such work is embodied in the Hybrid 
system~\cite{felty12jar}.
Supporting polymorphism in this kind of a setting will eventually
become important and we believe that the ideas we have developed in
this paper will be relevant to these systems too at that point.


As already noted, we have implemented a version of \Abella
that incorporates schematic polymorphism based on the ideas developed
in this paper. 
This version provides several benefits over the previously available
version of the system.
First, it allows us to write polymorphic specifications in \HH that
can be executed as programs in \LProlog using previously existing
capabilities and that can now be reasoned about using \Abella.
This support for polymorphism eliminates a lot of redundancy in
programming and also makes the programs or specifications and proofs
of their properties more modular.
Second, even without considering the specification logic, \Abella
itself is able to treat polymorphic definitions and theorems.
Third, the extension means we are able to create standard libraries
for \Abella that contain definitions of generic data structures such
as lists and sets and theorems describing their properties.
We have only recently completed the implementation of our system and
it has therefore not seen sufficient use for us to quantify the above
benefits based on actual experience.
However, we have used the system in one significant project that
concerns the implementation and verification of a compiler for an
extension of PCF that includes
recursion~\cite{wang16phd,wang16esopshort}.  
The polymorphism enhancements helped eliminate redundancies in both
the implementation and the verification in the manner described.
For example, using it led to a reduction of $14\%$, measured in terms
of the lines of code, in the formalization of the library components of the
project in comparison with an alternative development in the earlier
version of \Abella that did not support polymorphism.
More details about this project and comparisons can be found at the
website mentioned in the introduction.
Information on how to download the enhanced system can also be found
at this website. 

\section*{Acknowledgements}

This work has been supported by the National Science Foundation grant
CCF-1617771. Opinions, findings and conclusions or
recommendations that are manifest in this material are those of the
participants and do not necessarily reflect the views of the NSF.

\bibliographystyle{ACM-Reference-Format}
\bibliography{../../../references/master,../ppdp18/local}

\end{document}